\newtheorem{example}{Example}
\newtheorem{definition}{Definition}
\newtheorem{Lemma}{Lemma}
\newtheorem{theorem}{Theorem}
\title{A Game-theoretical Approach to Analyze Film Release Time}
\author{Hua Liu \\
ITCS, Shanghai University of Finance and Economics\\
\texttt{liu.hua@mail.shufe.edu.cn}
\and Mengjing Chen\\
IIIS, Tsinghua University\\
\texttt{cmj16@mails.tsinghua.edu.cn}
\and Xiaolong Wang\\
Jiuquan Satellite Launch Center\\
\texttt{379139010@qq.com}
\and Zihe Wang\\
ITCS, Shanghai University of Finance and Economics\\
\texttt{wang.zihe@mail.shufe.edu.cn}
}
\date{}
\begin{document}
\maketitle
\begin{abstract}
  Film release dates play an important part in box office revenues because of the facts of obvious seasonality demand in the film industry and severe competition among films shown at the same time. In this paper,  we study how film studios choose release time for movies they produce to maximize their box offices. We first formalize this problem as an attraction competition game where players (film studios) consider both potential profits and competitors' choices when deciding the release time. Then we prove that there always exists a pure Nash equilibrium and give the sufficient condition of the uniqueness of the Nash equilibrium. Our model can be generalized to an extensive game and we compute the subgame-perfect equilibrium for homogeneous players. For the case that one film studio could have multiple movies to release, we prove that finding a player's best response is NP-hard and it does not guarantee the existence of a pure Nash equilibrium. Experiments are provided to support the soundness of our model. In the final state, most of film studios, accounting for 84 percent of the market, would not change their release time. The behaviors of film studios imply they are following some strategies to reach a Nash equilibrium.

\end{abstract}

\section{Introduction}\label{sec:intro}
The global market of the film industry is 41.7 billion dollars in 2018 \cite{boxofficerev}. %\footnote{https://www.statista.com/statistics/271856/global-box-office-revenue/}
It keeps steadily growing at the speed of three percent annually in the last six years.
As a key component in the entertainment industry, the sale of a movie has been extensively studied in the academic field 
\cite{sivaprasad2018multimodal,ruhrlander2018improving,geng2015pre,dellarocas2007exploring,ghiassi2015pre}.
%lim2007variational,azaria2013movie,bennett2007netflix}
A bunch of factors can influence the sales of a film, for instance, 
the director, the film starring and the story.
Besides these obvious elements which are related to the quality of a movie, 
film's release date is one of the last major vehicles by which studios compete with each other \cite{Einav2010NOT,feng2017quality}.
                                
In the film industry, the release date of a film has a significant effect on the total box office of the film \cite{chiou2008timing}. %For most of the films, if they do not have good performance in the first few days, it is hard to believe they will get high profit in the end. 
In fact, the first week after release usually accounts for a great proportion in the total box office of a film, which makes the film's box office is time-sensitive. In this paper, we assume every film is shown for only one week. It means the gross sale of a film is only from the first releasing week. 

Films that are showing will compete with each other on their quality and popularity. Every film prefers to the time slot with a low level of competition intensity. For example, a middle-budget flick is less likely to choose a release date when there is a blockbuster film being shown since the blockbuster film will attract the most audience and occupy a large number of screens on cinemas \cite{krider1998competitive,sochay1994predicting}. 
Also, blockbuster films would like to avoid releasing at the same time in the case of box office dilution \cite{einav2007seasonality}. 

On the other hand, a film's sale is related to the seasonality in demand. A film is usually released on major holidays and Fridays rather than normal workdays even though the competition is more severe for the reason that major holidays imply underlying huge demands \cite{Einav2010NOT}. 
A film avoids contending the audience with other big events like the football World Cup Final which results a low demand.

A film studio seeks to maximize its box office which requires a comprehensive consideration of two main reasons (competition and demand) when choosing the release date. The price competition among film tickets is ignored since ticket prices are very close during the same period and cause a limited impact on box office \cite{orbach2007uniform}.

Due to the development of the information society, film studios are able to acquire the information about the film industry and competitors, assisting them to make wiser date release decisions.  For the film industry information, one can predict the demand (number of audience that go to cinemas) by observing the box office and the number of ticket sales in history \cite{einav2007seasonality,terry2011determinants}. For film information, whether a film is of high-quality and attractive is easy to judge by information of the film like director, starring, budget, trailers and so on \cite{redondo2010modeling,ainslie2005modeling}. 
Film studios can also get a movie's information by a sampling survey after its advanced screening.
Hence we regard the market size information and film popularity information as common knowledge in our model. 
\iffalse
In this paper, we propose an attraction competition game to model the film release problem. 
We adopt a game-theoretic approach to analyze the film studios' strategies.
In the game, each player has a film to launch, and each film has a popularity degree which represents the film's attractiveness among audience. The  popularity degree of the film is public information for all players. There is a set of available time slots and the total box office of each time slot is known. 
For films released at the same slot, their market shares are proportional with the popularity degrees (their own popularity degree divided by the sum of popularity degrees of all films at this time slot). 
Given other players' actions, a player chooses an optimal time slot to maximize box office revenues. 
\fi

Beyond the film release application, our framework and results are applicable to many similar scenarios.
For example, the live streaming video market grows in an amazing speed which is valued at $\$30$ billion in 2016 \cite{livevideo}. %\footnote{https://www.zype.com/blog/live-video-streaming-the-30-billion-industry-youre-missing-out-on/}. 
In a live streaming platform, one network anchor takes mainly the demand and competition into consideration 
when choosing the broadcasting time.

\subsection{Our contribution}
We analyze the time release problem in the film industry and our contributions can be summarized as follows.
\begin{itemize}
	\item First and foremost, we formalize the film release problem as a theoretical attraction competition game problem based on the reality. To the best of our knowledge, this is the first attempt to analyze film release problem in a game-theoretic aspect with a specific utility model. %The assumptions in the game are reasonable and manageable. The experiment provides evidence of the soundness of the model.
 	\item In the normal form setting, we prove that there always exists a pure Nash equilibrium. To reach a Nash equilibrium, players can run a greedy algorithm to select time slots sequentially in the decreasing order of the popularity degree. This result is consistent with the fact that blockbuster films usually decide the release time earlier than small-budget flicks.
Experiment shows that film studios that account for 84 percent of the market make right choices on release time.

\iffalse 	
 	In the normal form setting, we prove that there always exists a pure Nash equilibrium in the game by defining a lexicographic order over players' utility vectors when each film studio only has one film to release. Then we show that it is possible to have multiple pure Nash equilibria and give one sufficient condition for the uniqueness of  pure Nash equilibrium. 
	\item We show an efficient way for film studios to reach a Nash equilibrium. Players use a greedy algorithm to select a time slot sequentially in the decreasing order of the popularity degree. This result is consistent with the fact that blockbuster films usually decide the release time earlier than small-budget flicks.
\fi
	
	\item We generalize our model to an extensive game where players choose time slots in an arbitrary order. We compute the subgame-perfect equilibrium when players have the same popularity degree.  When there are two time slots, we prove that the outcome of subgame-perfect equilibrium forms a Nash equilibrium. 
	\item At last we consider the more general case where each film studio could have multiple films to release. In this setting we give a negative result that Nash equilibrium does not always exist. Furthermore, it is a NP-hard problem for a film studio to find a best response (a set of release dates) given other players' actions fixed.
\end{itemize}

\subsection{Related work}
The film industry is widely studied by researchers both from economics and computer science on a variety of subjects. Many researchers concentrate on box office forecasting, by using learning methods on consumers or news data \cite{mishne2006predicting,geng2015pre,ruhrlander2018improving}. However, they ignore the competition between film studios.
%, while we pay more attention to the influence of the competitive environment on player's behaviors. 
Other topic like revenue sharing \cite{einav2007seasonality,prag1994empirical,moretti2011social} in the film industry is also well-studied, which bring insight into characterizations of the market. %Based on the seasonality and competitiveness of film industry, we focus on film release problem and propose an attraction competition game.

Our attraction competition model is similar to the model of spacial competition game which have a large body of work \cite{hotelling1990stability,osborne1993candidate,sengupta2008hotelling,rodrigues2017non,anastasiadis2018heterogeneous}. An instance of spacial competition games is that retailers choose the locations of shops to maximize their profits.
In a spacial competition game, a player's profit is defined to be the number of customers who choose the player where customers are distributed over all locations and usually go to the nearest shops \cite{hotelling1990stability}. 
The key difference between their models and ours is how players compete for resources (customers).
They assume customers are indifferent in  players (shops) with same distances to them so the resource is shared by players equally \cite{feldman2016variations,shen2017hotelling}. In our model the resource is proportional shared according to players' types (popularity degrees).

\iffalse
Our attraction competition model is similar to the model of spacial competition and facility location game which have a large body of work \cite{hotelling1990stability,osborne1993candidate,sengupta2008hotelling,rodrigues2017non,anastasiadis2018heterogeneous}. An instance of spacial competition games is that retailers choose the locations of shops to maximize their profits.
In facility location game, the profit is usually defined as the number of clients minus the cost of opening a shop and the transportation cost in the network. 
In a spacial competition game, a player's profit is defined to be the number of customers who choose the player where customers are distributed over all locations and usually go to the nearest shops \cite{hotelling1990stability}. 
The key difference between their models and ours is how players compete for resources (customers).
In their model, customers are indifferent in  players (shops) with same distances to them so the resource is shared by players equally \cite{feldman2016variations,shen2017hotelling}. In our model the resource is proportional shared by players according to their types (popularity degrees). 
Another major difference is that their models mostly consider a continue location space such as an interval while the action space of our model is discrete. 
\fi

The most related works is \cite{Einav2010NOT}. 
He proposes an empirical model for film studio's utility. 
By assuming that film studios' actions form a Nash equilibrium in the real data,
he pins down a set of parameters in the empirical model.
The main difference is that the utility functions in \cite{Einav2010NOT} are not specified but with parameters learned from data while we use a concrete model which makes the model clear and theoretical results possible.
\iffalse
He focuses more on solving for the unique perfect Bayesian equilibrium of the game 
by a pseudo-backward induction algorithm to compute the probabilities of each possible action that players choose.
In this paper, we concentrate on the existence of equilibrium and how to achieve this equilibrium instead.

Finally, \cite{Einav2010NOT} shows that film studios are possible to gain more profits when releasing films on weekdays instead of holidays which are not verified in the real data. In the contrast, our result that how film studios achieve a Nash equilibrium can be verified in the real data.
\fi
%His results from data reflect player's behaviors and the influence of sequential moves in the film release problem, and we will show theoretical analysis of them.  \cite{seim2006empirical}  
%Seim formalizes the problem as an incomplete information game. In his simulations, he shows the tradeoff between seasonality of the demand and competitiveness,while we will analyze the two factors in a  theoretical way. 

\section{Attraction Competition Model}\label{sec:mdoel}
%We first introduce the players and the action space in the game.
In our model, we have $n$ film studios (players) who decide the film release time and each of them has only one film to release. Player $i$'s type is 
the film $i$'s popularity degree, denoted by $\theta_i$. % \in \Theta_i$ where $\Theta_i$ is the type space of player $i$. 
All players' types are common knowledge.
There are $m$ available time slots for releasing, denoted by $M=[m]$,  from where film studios make a choice. Let player $i$'s choice(action) be $ a_i \in M$. 
For each time slot, we assume there is a fixed number of audience that movies compete to attract. 
The demand of the audience in $j$-th time slot is denoted by $d_j$ which is also public information.

Next we define the utility functions.
Player $i$'s utility $u_i(\bm{a},\bm{\theta})$ is defined to be the number of audience who watches player $i$'s film, where $\bm{a}=(a_1,a_2,\dots, a_n)$ and $\bm{\theta}=(\theta_1,\theta_2,\dots,\theta_n)$ are the action profile and the type profile of all players respectively.
Players are selfish and utility maximizers.
To simplify our model, we assume each film only gets audience from the very time slot when it is released.
The movies that released in the same time slot compete with each other in a way that 
they divide the audience population in proportional to the popularity degrees. We use $C_j$ to denote the set of players who choose time slot $j$. 
Then the utility of player $i$ can be formally expressed as
$$
%u_i(s_i, s_{-i})=d_{a_i}\cdot \frac{p_i}{\sum_{k\in C_{a_i}} p_k},\quad \forall i\in C_{a_i}.
u_i(\bm{a},\bm{\theta})=d_{a_i}\cdot \frac{\theta_i}{\sum_{k\in C_{a_i}} \theta_k},\quad \forall i\in [n]
$$
%Every film studio is a rational player who wants his utility maximized.
We use an illustrative example throughout this paper for better understanding.
\begin{example}
\label{3m:example}
There are 3 movies in total with popularity degree $\theta_1=4, \theta_2=3, \theta_3=2$ and two time slots with audience demand $d_1=12, d_2=9$.
Then every player's  action space is $\{1,2\}$.

Suppose all three film studios release at the first time slot. In this case, the three players' utilities would be
$u_1=12 \cdot \frac{4}{2+3+4}=5.33$,$u_2=12 \cdot \frac{3}{2+3+4}=4$,$u_3=12 \cdot \frac{2}{2+3+4}=2.67$.
\end{example}
%A player can choose release time slot in favor of himself. So t
%This is not a stable state since every player has an incentive to deviate to the second time slot.

\section{Nash Equilibrium}\label{sec:nash}
Given each player's utility expression based on their actions, the attraction competition model becomes a normal form game.
In this section, we study the existence of pure Nash equilibrium in this game.
Let the pure strategy $s_i : \Theta_i \rightarrow M$ be the mapping from every possible type player $i$ could have to the choice he would make if he is that type. We denote that $\bm{s}= (s_1,s_2,\dots,s_n)$ is the strategy profile of all players and the $\bm{s}_{-i}$ is the strategy profile of all players except for player $i$. We use the standard definition of Nash equilibrium.

\begin{definition}
When there is a strategy $\bm{s^*}$ that no player can increase his utility by changing releasing time unilaterally, we call it a Nash equilibrium, that is
\begin{eqnarray*}
\forall i, s_i:u_i(s^*_i, \bm{s^*}_{-i}) \geq u_i(s_i, \bm{s^*}_{-i}).
\end{eqnarray*}
\end{definition}
In Example \ref{3m:example}, the action profile that all players choose the first time slot is not a Nash equilibrium. It is easy to verify $C_1=\{1,3\}, C_2=\{2\}$ constitute a Nash equilibrium. 
\iffalse
For player 1, we have %$u_1=12\cdot \frac{4}{4+2}>9\cdot \frac{4}{4+3}$. 
$u_1=12\cdot 4/(4+2)>9\cdot 4/(4+3)$. 
For player 2, we have 
%$u_2=9\cdot \frac{3}{3}>12\cdot \frac{3}{4+3+2}$. 
$u_2=9\cdot 3/3>12\cdot 3/(4+3+2)$. 
For player 3, we have
%$u_3=12\cdot \frac{2}{4+2}>9\cdot \frac{2}{2+3}$.
$u_3=12\cdot 2/(4+2)>9\cdot 2/(2+3)$.
\fi

One major question is whether there always exists a Nash equilibrium and how players achieve it.

\begin{theorem}
In the attraction competition game, there always exists a pure Nash equilibrium.
If players choose the release time greedily in the decreasing order of the popularity degree, the reached schedule is a Nash equilibrium.
\label{existence}
\end{theorem}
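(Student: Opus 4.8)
The plan is to prove the stronger statement that the greedy schedule is a pure Nash equilibrium; since a greedy schedule always exists, this also settles the existence claim. Relabel the players so that $\theta_1\ge \theta_2\ge\cdots\ge\theta_n$ (breaking ties by any fixed rule), and let the greedy procedure insert players in this order: when player $i$ moves, each slot $j$ carries the partial load $w^{(i)}_j:=\sum_{k<i,\,a_k=j}\theta_k$, and $i$ picks a slot $a_i$ that maximizes the utility it would receive if the schedule ended there, i.e. $a_i\in\arg\max_j \frac{d_j}{\theta_i+w^{(i)}_j}$. Write $W_j:=\sum_{k\in C_j}\theta_k$ for the load on slot $j$ in the final schedule $\bm a$, so that $u_i(\bm a,\bm\theta)=d_{a_i}\theta_i/W_{a_i}$, and fix a player $i$ together with an alternative slot $j\neq a_i$. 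Since $i\notin C_j$, deviating to $j$ would give $i$ the value $d_j\theta_i/(W_j+\theta_i)$, so it suffices to establish
\[
\frac{d_{a_i}}{W_{a_i}}\;\ge\;\frac{d_j}{W_j+\theta_i}.
\]

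The naive attempt — reading this off player $i$'s own greedy step — does not work, because players processed after $i$ may also choose $a_i$ and dilute $i$'s share, so $\frac{d_{a_i}}{W_{a_i}}$ need not dominate $\frac{d_{a_i}}{\theta_i+w^{(i)}_{a_i}}$. The fix I would use is to argue from the \emph{last} player to land on slot $a_i$. Let $p$ be the player with $a_p=a_i$ that the greedy procedure processes last; since $i$ also lands on $a_i$, player $p$ comes no earlier than $i$ in the processing order, hence $\theta_p\le\theta_i$, and since no one joins $a_i$ after $p$, the load $p$ sees there is exactly $w^{(p)}_{a_i}=W_{a_i}-\theta_p$, so $\theta_p+w^{(p)}_{a_i}=W_{a_i}$. (When $p=i$ this is still correct and no case split is needed.)

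Now chain three elementary facts. By the greedy optimality of $p$'s choice, $\frac{d_{a_i}}{\theta_p+w^{(p)}_{a_i}}\ge \frac{d_j}{\theta_p+w^{(p)}_j}$; the left-hand side equals $\frac{d_{a_i}}{W_{a_i}}$ by the previous paragraph; on the right-hand side, loads only grow over the run so $w^{(p)}_j\le W_j$, and finally $\theta_p\le\theta_i$ gives $\theta_p+W_j\le\theta_i+W_j$. Combining,
\[
\frac{d_{a_i}}{W_{a_i}}\;\ge\;\frac{d_j}{\theta_p+w^{(p)}_j}\;\ge\;\frac{d_j}{\theta_p+W_j}\;\ge\;\frac{d_j}{\theta_i+W_j},
\]
which is exactly the inequality needed. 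As $i$ and $j$ were arbitrary, no player can profitably deviate, so $\bm a$ is a pure Nash equilibrium.

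The main obstacle is precisely the pivot in the third paragraph: a player's own greedy decision offers no protection against subsequent congestion, and one must recognize that the right witness is the last arrival into that player's slot, whose popularity is conveniently bounded by $\theta_i$ exactly because the players are inserted in decreasing order of popularity — this is where the ordering in the statement is used. Ties among the $\theta_i$ are harmless, since the argument only invokes $\theta_p\le\theta_i$, valid for any processing order consistent with non-increasing popularity. (If one instead wanted existence alone, a lexicographic potential on the sorted utility vector also works, but the greedy argument is shorter and yields the stronger conclusion.)
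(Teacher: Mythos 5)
Your proof is correct. It takes a genuinely different route from the paper: the paper does not verify the greedy schedule directly, but instead reduces the game to a selfish routing game on parallel links (jobs of weight $\theta_i$, links of speed $d_j$, latency $\sum_{k\in C_j}\theta_k / d_j$, which is payoff-equivalent since $\theta_i$ is a fixed multiplicative constant for player $i$) and then invokes Theorems 1 and 2 of Fotakis et al.\ for both the existence of a pure Nash equilibrium and the fact that greedy choices in decreasing weight order reach one. Your argument is a self-contained re-proof of that cited greedy result in this setting: the key device is to certify player $i$'s stability not from $i$'s own greedy step (which, as you correctly note, is diluted by later arrivals) but from the greedy step of the last player $p$ to join slot $a_i$, for whom $\theta_p+w^{(p)}_{a_i}=W_{a_i}$ and $\theta_p\le\theta_i$, giving the chain $\frac{d_{a_i}}{W_{a_i}}\ge\frac{d_j}{\theta_p+w^{(p)}_j}\ge\frac{d_j}{\theta_i+W_j}$; all steps check out, including the tie cases. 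The trade-off: the paper's reduction is shorter and immediately imports further structural facts about this class of congestion/load-balancing games, while your direct argument keeps the result elementary and makes explicit exactly where the decreasing-popularity order is used (only through $\theta_p\le\theta_i$), which the citation-based proof leaves hidden inside the referenced theorems.
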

\begin{proof}
When player $i$ chooses time slot $j$, his utility is $\frac{d_j\cdot \theta_i}{\sum_{k\in C_j} \theta_k}$.
Since $\theta_i$ is a fixed parameter, player $i$ is maximizing $\frac{d_j}{\sum_{k\in C_j} \theta_k}$ which is equivalent to minimizing $\frac{\sum_{k\in C_j} \theta_k}{d_j}$. 

We reduce the problem to a selfish routing game where there is a network  consisting of $m$ parallel links connecting a source node and a destination node. Each of $n$ network users routes a particular amount of traffic, 
denoted by $d_i$ for user $i$, along a link.
When users in set $C_j$ choose the $j$-th link,  they have the same latency cost $\frac{\sum_{k\in C_j} \theta_k}{d_j}$ that they want to minimize. 

The attraction competition game have the same utility structure as this selfish routing game.
Therefore they have the same Nash equilibrium. According to Theorem 1 in \cite{fotakis2002structure}, there always exists a Nash equilibrium in the routing game. According to Theorem 2 in  \cite{fotakis2002structure}, a Nash equilibrium can be derived when users make greedy decisions in the decreasing order of the amount of traffic.
We conclude our theorem.
\end{proof}

The theorem says players can reach a Nash equilibrium in a quite efficient and easy way. In the data from real industry, the change of release time can be seen as a sign of better responding to other film studios’ actions. The low frequency of such behaviors supports that film studios run the greedy algorithm can reach a Nash equilibrium efficiently. More discussions will be found in the experiment.

We revisit Example \ref{3m:example}. Let's see what we get if all players adopt greedy algorithm sequentially.
Player 1 chooses first since he has the largest popularity degree, he chooses time slot 1 since $d_1> d_2$.
Then player 2 chooses time slot 2 since $d_1\cdot \frac{3}{4+3}<d_2\cdot \frac{3}{3}$.
At last player 3 chooses time slot 1 since $d_1\cdot \frac{2}{4+2}>d_2\cdot \frac{2}{3+2}$. 
The final state is $C_1=\{1,3\}$ and $C_2=\{2\}$ which is a Nash equilibrium.

\begin{theorem}
\label{homo}
If all movies have the same popularity degree, then the Nash equilibrium is unique.
\end{theorem}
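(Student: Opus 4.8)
The plan is to reduce the homogeneous game to a pure load-balancing problem and then rule out two distinct equilibrium profiles by an exchange argument. First I would observe that when all types are equal, say $\theta_i=\theta$, the utility of any player in slot $j$ collapses to $u_i=d_j\,\theta/(|C_j|\theta)=d_j/|C_j|$; writing $n_j:=|C_j|$, a player's payoff depends only on the demand $d_j$ and the number of movies $n_j$ sharing that slot. Consequently ``uniqueness'' should be read, following the model's convention, as uniqueness of the occupancy vector $(n_1,\dots,n_m)$ (equivalently of its distribution, since swapping two identical players changes no $n_j$). I would then record the equilibrium characterization: a profile with occupancies $(n_j)$ is a Nash equilibrium iff for every occupied slot $j$ (i.e. $n_j\ge 1$) and every other slot $j'$, a resident of $j$ gains nothing by moving, that is $\frac{d_j}{n_j}\ge\frac{d_{j'}}{n_{j'}+1}$. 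Under the model's standing no-tie assumption (no player is indifferent between two distinct slots at an equilibrium) these inequalities are strict, and it is exactly this strictness the argument exploits; note that this is a condition on the realized equilibrium comparisons, not a requirement that the $d_j$ be pairwise distinct.

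Next I would argue by contradiction. Suppose two equilibria have occupancy vectors $(n_j)$ and $(n'_j)$ with $\sum_j n_j=\sum_j n'_j=n$ but $(n_j)\neq(n'_j)$. Because the totals agree, there is a slot $a$ with $n_a>n'_a$ and a slot $b$ with $n_b<n'_b$; in particular $n_a\ge 1$ (so $a$ is occupied in the first equilibrium) and $n'_b\ge 1$ (so $b$ is occupied in the second). Applying the strict equilibrium condition to a resident of $a$ in the first profile and to a resident of $b$ in the second profile gives
\[
\frac{d_a}{n_a}>\frac{d_b}{n_b+1}
\qquad\text{and}\qquad
\frac{d_b}{n'_b}>\frac{d_a}{n'_a+1}.
\]
I would then use the integer gaps $n_a\ge n'_a+1$ and $n'_b\ge n_b+1$ together with the monotonicity of $x\mapsto d/x$ to obtain $\frac{d_a}{n'_a+1}\ge\frac{d_a}{n_a}$ and $\frac{d_b}{n_b+1}\ge\frac{d_b}{n'_b}$.

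Chaining the four relations yields
\[
\frac{d_a}{n_a}>\frac{d_b}{n_b+1}\ge\frac{d_b}{n'_b}>\frac{d_a}{n'_a+1}\ge\frac{d_a}{n_a},
\]
i.e. $\frac{d_a}{n_a}>\frac{d_a}{n_a}$, a contradiction; hence no two equilibria have different occupancy vectors, which is the claimed uniqueness. The main obstacle — and the only place the hypothesis genuinely bites — is the passage from weak to strict inequalities: without the no-tie assumption the chain degenerates into a string of equalities $\frac{d_a}{n_a}=\frac{d_b}{n_b+1}=\frac{d_b}{n'_b}=\frac{d_a}{n'_a+1}$, which is realizable and then permits two distinct balanced profiles, so I would invoke the no-tie hypothesis precisely at the two equilibrium inequalities and nowhere else. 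A secondary detail to check carefully is that the crossing slots $a,b$ are occupied in their respective profiles, so that the no-deviation inequalities are in fact available to apply.
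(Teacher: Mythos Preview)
Your proof is correct and follows essentially the same exchange argument as the paper: assume two equilibria, find slots $a,b$ whose occupancies move in opposite directions, and derive a contradiction from the no-deviation conditions combined with the integer gaps. The paper packages the same inequalities in the language of ``utility density intervals'' $(\tfrac{d_j}{|C_j|+1},\tfrac{d_j}{|C_j|}]$ that must pairwise overlap at equilibrium, but your direct inequality chain is the identical argument stripped of that wrapper.
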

All missing proofs are provided in the full version due to the lack of space.
\section{Extensive Form Game}\label{sec:extensive}
%In the previous section, we study Nash equilibria where no players wants to deviate.
In this section, we generalize this problem to an extensive form game with perfect information which captures the sequence of the actions of the players. Players choose actions in the order of 1 to $n$ and they can see all the events that have previously occurred. Each player's pure strategy specifies an action to be taken at each of his nodes.
In this section, we consider the subgame-perfect equilibrium (SPE).
\begin{definition}[subgame-perfect equilibrium]
The subgame-perfect equilibria(SPE) of a game $G$ are all strategy profiles $\bm{s}$ such that for any subgame $G'$ of $G$, the restriction of $\bm{s}$ to $G'$ is a Nash equilibrium of $G'$.
\end{definition}
%We will see greedy algorithm is not a film studio's optimal strategy in such a sequential action game. 

\subsection{Homogeneous players}
Players are called homogeneous(heterogenous) when their films have the same (different) popularity degrees. 
We first solve the subgame-perfect equilibrium (SPE) computation problem when players are homogeneous. 
In this case, the distribution of the number of players in time slots is unique by Theorem \ref{homo}. 
In the unique Nash equilibrium, w.l.o.g., we assume
the utility density in each time slot (defined as the demand divided by the number of films in a slot) is in decreasing order such that $d_1/|C_1|>d_2/|C_2|>...>d_m/|C_m|$. The case when there are same utility densities, i.e. $d_{i}/|C_i|=d_{i+1}/|C_{i+1}|$,  can be easily generalized.
For sake of convenience, we assume there is only one best response at each node. Thus there is an unique subgame-perfect equilibrium which can be computed by backward induction. 
\begin{theorem}
Following the action path chosen in the unique subgame-perfect equilibrium, player $\sum_{k<i} |C_k|+1$ to player $\sum_{k\leq i} |C_k|$ will choose the i-th time slot for any $i\in [m]$.
\label{thm_extensive}
\end{theorem}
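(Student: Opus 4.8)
The plan is to isolate a single self-reinforcing lemma about the backward-induction continuation and then apply it once. Write $c_i=|C_i|$ for the unique Nash equilibrium counts (Theorem~\ref{homo}), indexed so that $d_1/c_1>d_2/c_2>\dots>d_m/c_m$, and put $f(i)=\sum_{k\le i}c_k$, $f(0)=0$. Call a configuration $(n_1,\dots,n_m)$ with $n_j\le c_j$ for every $j$ a \emph{sub-equilibrium} configuration, and call a slot $j$ a \emph{hole} if $n_j<c_j$. The only fact I need to borrow from Theorem~\ref{homo} is the ``no-overshoot'' inequality
\[
\frac{d_j}{c_j+1}<\frac{d_l}{c_l}\qquad\text{for all }j\ne l,
\]
which is nothing more than the equilibrium condition ``a player in slot $l$ has no profitable deviation to slot $j$'' together with the paper's standing assumption of no ties, which upgrades the inequality from weak to strict.

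The core is the following Key Lemma, proved by induction on the number $k$ of players who still have to move: \emph{from any sub-equilibrium configuration with $k$ movers remaining, the SPE continuation fills the configuration up to exactly $(c_1,\dots,c_m)$, and the first of these movers chooses the hole of smallest index.} The base case $k=0$ is forced, since $\sum_j n_j=n=\sum_j c_j$ and $n_j\le c_j$ imply $n_j=c_j$. For $k\ge 1$ there is at least one hole; consider the mover. If he picks a hole $j'$, the configuration stays a sub-equilibrium with $k-1$ movers left, so by the induction hypothesis the continuation completes slot $j'$ to $c_{j'}$ players and he obtains utility $d_{j'}/c_{j'}$. If instead he picks a full slot $j$, then that slot already contains $c_j$ players, hence at least $c_j+1$ after his move, and since a slot's load can only grow his final utility is at most $d_j/(c_j+1)$, which by the no-overshoot inequality is strictly less than $d_{j'}/c_{j'}$ for \emph{every} hole $j'$. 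Thus the mover strictly prefers some hole to any full slot, and among holes he strictly prefers the one with the largest $d_{j'}/c_{j'}$, i.e.\ the smallest-index hole (densities decrease in the index); making that move keeps the configuration a sub-equilibrium, and the induction hypothesis handles the rest. As a by-product this also records that the SPE outcome coincides with the unique Nash equilibrium of the simultaneous game.

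Applying the Key Lemma to the empty configuration $(0,\dots,0)$, which is a sub-equilibrium with all $n$ movers remaining, yields the theorem: the smallest-index hole is slot $1$ until slot $1$ becomes full, so players $1,\dots,c_1$ all pick slot $1$; then the smallest-index hole is slot $2$, so players $c_1+1,\dots,c_1+c_2$ pick slot $2$; and so on, so that player $t$ picks slot $i$ exactly when $f(i-1)<t\le f(i)$.

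I expect the real work to lie in two bookkeeping points rather than in any deep difficulty. First, one must be careful that the no-overshoot inequality is strict, which is exactly where ``only one best response at each node'' is used, and correspondingly the degenerate regime $n<m$, in which some slot is genuinely empty in equilibrium (so $d_j/c_j$ is not finite and the stated ordering needs interpretation), should be either excluded or checked separately. Second, the induction hypothesis of the Key Lemma must be strong enough: it cannot merely assert ``the configuration fills to $(c_1,\dots,c_m)$'' but must also pin down each mover's choice, because the deviation analysis for the current mover relies on knowing that a move into a hole $j'$ triggers a continuation that brings slot $j'$ up to exactly $c_{j'}$. The elementary ``a slot's load can only grow'' observation is the device that lets us bound utilities after an overshooting deviation without ever analyzing a non-sub-equilibrium continuation.
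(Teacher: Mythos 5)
Your proof is correct (modulo the paper's standing assumptions that you rightly flag: no ties at any node, and a sensible reading of the density ordering when some equilibrium slot is empty), and it reaches the result by a leaner route than the paper, though the underlying idea — on-path play never overshoots the equilibrium counts and fills slots in decreasing density order — is the same. The paper's proof first uses the pairwise overlap of the utility-density intervals $(d_j/(|C_j|+1),\,d_j/|C_j|]$ from the proof of Theorem~\ref{homo} to fix a single threshold $r$ lying in all of them, proves by backward induction a lemma that from \emph{any} node (including histories that have already overshot some slot) every remaining player secures utility above $r$ in the SPE continuation, concludes that the SPE outcome has the equilibrium occupancy $(|C_1|,\dots,|C_m|)$, and then prunes the game tree to the non-overshooting moves, identifying the SPE strategy on the pruned tree as ``choose the smallest-index slot still below its equilibrium count.'' You instead run a single induction on sub-equilibrium configurations: the threshold $r$ is replaced by the equivalent pairwise no-overshoot inequalities $d_j/(|C_j|+1)<d_l/|C_l|$ (the existence of $r$ is exactly the weak form of these), and continuations after an overshooting move are never analyzed at all but dismissed by the load-monotonicity bound $d_j/(|C_j|+1)$, so you avoid both the common threshold and the pruned-tree construction. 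The trade-off is mild: the paper's lemma yields information about all subgames (closer to a description of the full SPE strategy profile), whereas your key lemma pins down only the sub-equilibrium continuations plus the on-path choices — which is precisely what Theorem~\ref{thm_extensive} asserts — with less machinery; the two hinge facts (a counting argument guaranteeing an unfilled slot, and the fact that slot loads only grow so overshooting can never pay) are identical in both arguments.
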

This theorem does not give the explicit strategy on every node. 
It only says what happens on the path from the root node to the leaf node chosen in the SPE.
As the theorem implies, players with priorities to choose earlier earn higher utilities in the outcome of the SPE.

\subsection{Heterogenous players}
The analysis becomes complicated when players are heterogenous.
To simplify our problem, we compute SPE in the setting where  players make decisions in the decreasing order of popularization.
In the end of this section, we will discuss how the sequence of players impacts a player's utility.

In the extensive form game, 
sophisticated player will reason his impact on followers'  move instead of using greedy algorithm. 
Some players can get more utilities in subgame-perfect equilibrium.
When players adopt greedy algorithm in Example \ref{3m:example}.
Continue the computation in the last section, the players' utilities would be
$u_1= 12\cdot 4/(4+2)=8$, %\frac{4}{4+2}=8$,
$u_2=9\cdot 3/3=9$, % \frac{3}{3}=9$ and 
and $u_3=12\cdot 2/(4+2)=4$. %\frac{2}{4+2}=4$.
When players engage in the extensive form game, Figure \ref{tree} shows SPE in Example \ref{3m:example}.
Players' utilities are $u_1=9, u_2=7.2, u_3=4.8$ where player 1 chooses the second time slot and players 2,3 choose the first time slot.

Usually, backward induction method is used to compute SPE. In our problem, the size of game tree is exponential, but the size of input, which includes the order of players and the demand of each time slot,  is polynomial. So backward induction method is exponential time-consuming. 
We have to figure out an efficient way to tackle this problem. 
One possible method is to use the technique which solves the problem in the homogeneous players case.
We first eliminate a set of impossible outcomes(leaves) that the path chosen in the SPE will not reach.
Then we solve the SPE in the reduced game tree instead and prove SPEs in two game trees are identical at the common nodes.

\begin{theorem}
\label{spe}
The outcome of subgame perfect equilibrium in extensive form game is a Nash equilibrium when there are two time slots.
\end{theorem}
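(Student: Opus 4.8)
I would prove the claim by a backward induction over the players in the order $1,2,\dots,n$ in which they move, showing that the outcome $\bm a^*$ of the (assumed unique) subgame‑perfect equilibrium is a Nash equilibrium. Concretely, for $t=n,n-1,\dots,1$ I would show that the restriction of $\bm a^*$ to players $\{t,\dots,n\}$ is a Nash equilibrium of the subgame in which players $1,\dots,t-1$ are frozen at their $\bm a^*$-actions; the case $t=1$ is exactly the theorem. The base case $t=n$ is immediate: in the SPE the last mover picks a utility‑maximizing slot against the now‑fixed choices of everyone else, so he has no profitable deviation. For the inductive step with $t=i$, the induction hypothesis, applied to the subgame reached along the $\bm a^*$-path right after player $i$ moves, already says that players $i+1,\dots,n$ do not want to deviate in $\bm a^*$; the only thing left to check is that player $i$ does not want to deviate either.

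\textbf{Where two slots enters.} Say player $i$ sits on slot $A$ in $\bm a^*$ and let $B$ be the other slot. With two slots, the only deviation available to $i$ is to move to $B$, which (with everyone else frozen at $\bm a^*$) gives him $d_B\theta_i/(\Lambda_B+\theta_i)$, where $\Lambda_B$ is the $B$-load contributed by all other players in $\bm a^*$. On the other hand, by subgame perfection, $i$'s equilibrium payoff $d_A\theta_i/\Theta_A^*$ is at least what he obtains by choosing $B$ already at his own node and letting the followers $i+1,\dots,n$ re‑optimize along the SPE; that alternative play yields $i$ the payoff $d_B\theta_i/(\Lambda_B'+\theta_i)$, where $\Lambda_B'$ is the $B$-load from players $1,\dots,i-1$ (unchanged) plus the followers' re‑optimized total weight on $B$. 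Hence it suffices to show $\Lambda_B\ge\Lambda_B'$: when the followers re‑optimize after player $i$ has moved \emph{onto} $B$, they place weakly \emph{less} total weight on $B$ than they do in $\bm a^*$ (where $i$ sat on $A$).

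\textbf{The comparative‑statics lemma.} I would isolate the last statement as a lemma about the two‑slot subgame: if $F_B(w_A,w_B)$ denotes the total weight the ordered followers put on slot $B$ in the unique SPE of the subgame with ``background loads'' $(w_A,w_B)$ on the two slots, then $F_B$ is non‑decreasing in $w_A$ and non‑increasing in $w_B$; applying this with $(w_A,w_B)=(g_A+\theta_i,g_B)$ versus $(g_A,g_B+\theta_i)$ (where $g_A,g_B$ are the loads from players $1,\dots,i-1$) yields exactly $\Lambda_B\ge\Lambda_B'$. I would prove the lemma by induction on the number of followers. The base case (one follower) is a one‑line sign check on $d_A(w_B+\tau)\ge d_B(w_A+\tau)$. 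For the step, the first follower $q_1$ — who, by the decreasing‑popularity order, is the \emph{heaviest} remaining player — chooses a slot, and in each branch the remaining followers play the subgame covered by the induction hypothesis; one then tracks how $q_1$'s choice moves as $w_B$ grows and combines this with the monotonicity of the $(k-1)$-follower outcomes.

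\textbf{Main obstacle.} I expect this induction to be the crux. The difficulty is that the auxiliary quantities are not individually monotone: when $w_B$ increases slightly, a later follower can switch slots, which makes the slot load that $q_1$ faces jump discontinuously, so $q_1$'s branch payoffs are only piecewise monotone and $q_1$'s choice can flip; a priori such a flip could raise $F_B$. The feature that makes everything still come out monotone is precisely the decreasing‑popularity assumption: $q_1$ is at least as heavy as every follower who might switch, so any such jump is dominated by the corresponding change caused by $q_1$'s own (re)location, and the induction must be set up so as to extract exactly this cancellation (one should also fold the standing ``no ties'' convention into the lemma's statement so that the SPE and the relevant best responses are well defined at every node). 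Once the lemma is established, the inductive bookkeeping above finishes the proof.
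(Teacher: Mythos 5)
Your reduction of the theorem to the inequality $\Lambda_B\ge\Lambda_B'$ is sound, but the comparative-statics lemma you invoke to get it --- that the followers' total SPE weight on a slot, $F_B(w_A,w_B)$, is non-decreasing in $w_A$ and non-increasing in $w_B$ --- is precisely the part you leave unproven. You yourself flag it as the crux: as the background load varies, later followers can switch slots, the loads faced by earlier followers jump, and the first follower's choice can flip, so the aggregate $F_B$ is not obviously monotone; your sketch only gestures at a cancellation driven by the decreasing-popularity order without actually setting up the induction that delivers it. As written, the proposal is therefore not a proof: monotonicity of sequentially re-optimized SPE play is a genuinely delicate statement, and it is needed here for \emph{every} player $i$, not just in easy cases.

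The paper sidesteps this entirely by choosing the deviator carefully instead of inducting from the first mover. Suppose the SPE outcome is not a Nash equilibrium, and let $i$ be the \emph{latest} mover who profits from a unilateral deviation, say from slot $j$ to slot $l$, so that $d_j/\sum_{k\in C_j}\theta_k < d_l/(\sum_{k\in C_l}\theta_k+\theta_i)$. Since players move in decreasing order of popularity, any follower $p$ of $i$ sitting on slot $j$ would have $\theta_p\le\theta_i$ and hence $d_j/\sum_{k\in C_j}\theta_k < d_l/(\sum_{k\in C_l}\theta_k+\theta_p)$, so $p$ would also want to deviate, contradicting the maximality of $i$; thus no follower shares $i$'s slot, and with only two slots every follower of $i$ already sits on $l$. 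Consequently, at $i$'s node in the game tree, choosing $l$ guarantees him at least $d_l\theta_i/(\sum_{k\in C_l}\theta_k+\theta_i)$ no matter how the followers re-optimize, because the load on $l$ can never exceed the case in which all of them remain on $l$ --- this is exactly the trivial special case of your $\Lambda_B\ge\Lambda_B'$, and the only instance the argument needs. That payoff strictly exceeds his on-path payoff $d_j\theta_i/\sum_{k\in C_j}\theta_k$, contradicting subgame perfection. So the missing lemma in your approach is avoidable: picking the latest profitable deviator makes the required inequality immediate, whereas your top-down induction forces you to prove a general monotonicity property of SPE follower behavior that remains unestablished.
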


To make the theorem clearer, we emphasize the difference between these two equilibria.
In the SPE, we care about every player's strategy
that consists of the moves on all his nodes in the game tree.
It is well known that these strategies constitute a Nash equilibrium.
But our theorem only focuses on the outcome of the SPE.
The theorem says the leaf node, achieved in the SPE, is a Nash equilibrium in the normal form setting where each player only takes one action .
It is interesting that subgame perfect equilibrium in our problem has such a property. 
In Example \ref{3m:example}, the outcome of SPE is 
$C_1=\{2,3\}, C_2=\{1\}$, which is indeed a Nash equilibrium. 
%Player 1, for instance, has no incentive to deviate when we fix player 2 and 3's actions.

%reduction still spe
%
%\begin{lemma}
%Some one wants to deviate, no followers.
%\end{lemma}
%
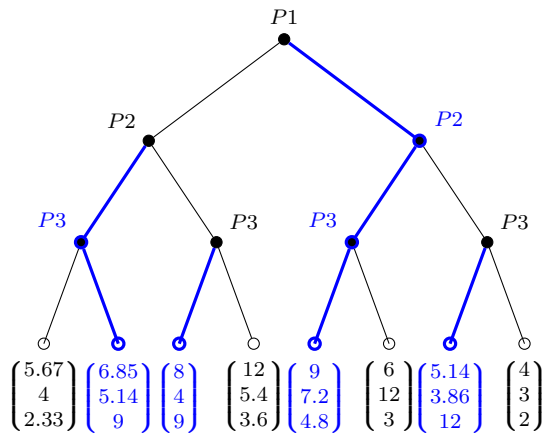
\begin{figure}
 	\begin{center}
    \small
 	 \begin{tikzpicture}[scale=.9,font=\footnotesize]
	   \tikzstyle{solid node}=[circle,draw,inner sep=1.5,fill=black]
	   \tikzstyle{hollow node}=[circle,draw,inner sep=1.5]
	   \tikzstyle{level 1}=[level distance=15mm,sibling distance=4 cm]
	   \tikzstyle{level 2}=[level distance=15mm,sibling distance=2.cm]
	   \tikzstyle{level 3}=[level distance=15mm,sibling distance=1.1cm]
	    \tikzstyle{level 4}=[level distance=15mm,sibling distance=1.5cm]
	   \node(0)[solid node,label=above:{$P1$}]{}
	   child{node[solid node,label=above left:{$P2$}]{}
	   	   child{[blue, very thick] node[solid node,label=above left:{$P3$}]{}
		   	   child{[black, thin] node[hollow node,label=below:{$ \left\lgroup\begin{matrix}5.67 \cr 4 \cr 2.33\end{matrix}\right\rgroup$}]{} }
			   child{[blue, very thick] node[hollow node,label=below:{$ \left\lgroup\begin{matrix}6.85 \cr 5.14 \cr 9\end{matrix}\right\rgroup$}]{} }
		   	}
		   child{node[solid node,label=above right:{$P3$}]{}
		   	   child{[blue, very thick] node[hollow node,label=below:{ $ \left\lgroup\begin{matrix}8 \cr 4 \cr 9\end{matrix}\right\rgroup$ }]{} }
			   child{node[hollow node,label=below:{$ \left\lgroup\begin{matrix}12 \cr 5.4 \cr 3.6\end{matrix}\right\rgroup$}]{} }
		   	}
	   	   }
	   child{ [blue, very thick] node[solid node,label=above right:{$P2$}]{}
	   	   child{[blue, very thick] node[solid node,label=above left:{$P3$}]{}
		           child{[blue, very thick] node[hollow node,label=below:{$ \left\lgroup\begin{matrix}9 \cr 7.2 \cr 4.8\end{matrix}\right\rgroup$ }]{} }
			   child{[black, thin] node[hollow node,label=below:{$ \left\lgroup\begin{matrix}6 \cr 12 \cr 3\end{matrix}\right\rgroup$}]{} }
		   	}
		   child{ [black, thin] node[solid node,label=above right:{$P3$}]{}
		   	   child{[blue, very thick] node[hollow node,label=below:{$ \left\lgroup\begin{matrix}5.14 \cr 3.86 \cr 12\end{matrix}\right\rgroup$}]{} }
			   child{ [black, thin] node[hollow node,label=below:{$ \left\lgroup\begin{matrix}4 \cr 3 \cr 2\end{matrix}\right\rgroup$}]{} }		   
		   	}%edge from parent [blue,very thick]
	   	};

          \end{tikzpicture}
     \end{center}
    \caption{Players' utilities are given in the bracket. The outgoing blue edges from  nodes indicate players' strategy. The outcome of the SPE is $C_1=\{2,3\}$, $C_2=\{1\}$ with utilities (9,7.2,4.8).}
\label{tree}
 \end{figure} 

We consider the reduced game tree whose nodes are all Nash equilibria. We use backward induction method to compute SPE.
If there is a polynomial number of Nash equilibria, then we can efficiently solve the problem.

However, this method fails when there is a superpolynomial number of Nash equilibria. 
We don't know if Theorem \ref{spe} will hold when there are more than two time slots neither.
More generally, it remains open that compute the subgame-perfect equilibrium in polynomial time.
The SPE computation problem may not lie in NP class.
\iffalse
\subsection{Proof of Theorem \ref{spe}}
\begin{proof}
Suppose the outcome of $\xi$ is not a Nash equilibrium.
Let player $i$ has the largest index among the players who want to deviate. 
Assume player $i$ chooses time slot $j$ and wants to deviate to slot $l$.
Then player $i$'s utility density must increases, $d_j/({\sum_{k\in C_j} \theta_k})< d_l/({\sum_{k\in C_l} \theta_k}+\theta_i)$.
%$\frac{s_j}{c_j}<\frac{s_l}{c_l+p_i}$.

Player $i$ has no follower choosing the same release time slot. If player $i$ has a follower $p$, we have $\theta_p<\theta_i$ and 
$$\frac{d_j}{{\sum_{k\in C_j} \theta_k}}<\frac{d_l}{{\sum_{k\in C_l} \theta_k}+\theta_i}<\frac{d_l}{{\sum_{k\in C_l} \theta_k}+\theta_p}.$$
This implies the follower's utility density increases by deviation.
It contradicts to the assumption that $i$ has the largest index.

But player $i$ has incentive to change his strategy in the new extensive form game. Notice that all players after $i$ choose time slot $l$. If player $i$ change action from $j$ to $l$ directly in the SPE $\xi^*$, player $i$ gets strictly better utility even if all players after him still choose time slot $l$.
Thus $\xi^*$ in not a SPE which is a contradiction.

So there is no player who wants to deviate in the SPE $\xi$.
We  conclude the theorem.
%
%
%
%
%We add some restrictions on players and only keep the players who choose time slot i and j.
%Then we a limitation on the choice slot i and j.
%So every remaining players can only choose i and j.
%By theorem, the induced strategy consists of a SPE.
%
%But if player k change his action from i to j, and he increases the utility. Since there is no follower after player k, we can prove player k improve the utility by deviation no matter what the players after him do.
\end{proof}
\fi

Now we consider the decision order's impact on a player's utility. 
When players are homogeneous, we have seen that players who choose first have advantages, i.e., the utilities in the outcome will be higher. 
So players will fight to take the good positions first. %For example, when there are three identical players, $s_1=1$, $s_2=1.5$. Players will compete to choose the second time slot. 
There also exist cases when players are heterogeneous with large gap between the popularity degrees, %in some extreme case, 
weak players will compete to choose first and might get larger utilities than the strong players get eventually.
For example, when there are two time slots, $d_1=1+1/t-\epsilon$, $d_2=1$, $\theta_1=t$, $\theta_2=1$. If player 2 succeeds to choose first time slot before player 1. His utility $u_2$ would be $1+1/t-\epsilon$, larger than $u_1=1$.
We wonder if this phenomenon always holds for the heterogeneous players case.
Since players can make decisions at any time in reality, does the player has an incentive to choose time slot in an earlier time? Is it always better to choose earlier than later?
Formally, the open question is when should a player make a decision in an extensive form game assuming that the other players' action sequence is fixed.

\section{Two Variant Models}\label{sec:group}
\subsection{Each film studio has multiple movies to release}
Each film studio has to decide the time slot for every movie it has.
We first prove that the complexity of calculating a player's best response is NP-hard. Then we take an example to show that the pure Nash equilibrium of the attraction competition game may not even exist in this setting.

\begin{theorem}
	The problem of finding player $i$'s best response of the attraction competition game in this setting is NP-hard.
\end{theorem}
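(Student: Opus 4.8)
The plan is to reduce the NP-complete \textsc{Partition} problem to the best-response computation problem, using an instance with only two time slots and one opponent. Fix the actions of all players other than $i$; the movies of the other players induce, for each slot $j$, a \emph{fixed} background popularity mass $B_j=\sum_{k}\theta_k$ summed over the competing movies placed in slot $j$. Player $i$ must assign each of his own movies — say with popularity degrees $w_1,\dots,w_q$ — to some slot. If player $i$ places the subset $S$ of his movies in slot $j$, then, summing the per-movie utilities $d_j\theta_m/\bigl(B_j+\sum_{m'\in S}w_{m'}\bigr)$ over $m\in S$, his box office from slot $j$ equals $d_j\cdot \dfrac{W(S)}{B_j+W(S)}$ with $W(S)=\sum_{m\in S}w_m$. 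Hence computing a best response amounts to finding a partition $(S_1,\dots,S_m)$ of $\{1,\dots,q\}$ maximizing
\[
\sum_{j=1}^{m} d_j\cdot\frac{W(S_j)}{B_j+W(S_j)}.
\]

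Now specialize: set $m=2$, $d_1=d_2=1$, and $B_1=B_2=1$, the latter realized by one extra player who owns two movies of popularity degree $1$, one already placed in each slot. Given a \textsc{Partition} instance $a_1,\dots,a_q$ with $\sum_m a_m=2T$ ($T\in\mathbb{Z}$), let player $i$'s movies have popularity degrees $w_m=a_m$. Writing $x=W(S_1)$ for the mass player $i$ puts in slot $1$, the objective becomes $f(x)=\frac{x}{1+x}+\frac{2T-x}{1+2T-x}$. Since $t\mapsto\frac{t}{1+t}=1-\frac{1}{1+t}$ has second derivative $-2/(1+t)^3<0$, the function $f$ is strictly concave, and it is symmetric about $x=T$; therefore $f$ has a unique maximizer at $x=T$ and $f(x)<f(T)$ for every $x\neq T$. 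The set of attainable values of $x$ is exactly the set of subset sums $\{W(S):S\subseteq\{1,\dots,q\}\}$, so the best-response value equals $f(T)=\frac{2T}{1+T}$ if and only if some subset of the $a_m$ sums to $T$, i.e. the \textsc{Partition} instance is a yes-instance. The reduction is polynomial and the threshold $f(T)$ is a polynomially representable rational, so a polynomial-time algorithm computing player $i$'s best response (and hence its value) would decide \textsc{Partition}. This proves NP-hardness; note it already holds with two time slots and a single opponent.

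The only points needing care are routine: (i) verifying the closed form $d_j\,W(S)/(B_j+W(S))$ for the per-slot contribution, which is just the sum of the individual proportional-share utilities; and (ii) the strict concavity and symmetry of $f$, which is what pins the unique continuous optimum at $x=T$ and guarantees a strictly positive gap $f(T)-f(x)$ whenever the target $T$ is missed. The main (still elementary) obstacle is to argue cleanly that \emph{exact} attainment of the peak — not merely approximate attainment — is what corresponds to a yes-instance, which is exactly why strict concavity is used; if one wants strong NP-hardness one can instead reduce from \textsc{3-Partition} using more slots and a correspondingly more elaborate choice of demands and background masses, but \textsc{Partition} already suffices for the stated theorem.
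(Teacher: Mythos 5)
Your reduction is correct and is essentially the paper's own argument: the paper also reduces from \textsc{Partition} using two slots with $d_1=d_2=1$, a background popularity of $1$ in each slot (there via two single-movie opponents rather than your one two-movie opponent), and the observation that the objective $\frac{x}{1+x}+\frac{2T-x}{1+2T-x}$ is maximized exactly at the balanced split. Your explicit concavity/symmetry justification and the threshold $\frac{2T}{1+T}$ make the decision step a bit more rigorous than the paper's sketch, but the approach is the same.
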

\iffalse
\begin{proof}
	We consider the decision version of the problem and prove the theorem by reduction from partition problem\cite{hayes2002computing}.
	Given a set of integers $R=\{r_1,...,r_k\}$, % is whether there is a subset of integers that sum to s.
 the question is to decide whether $R$ can be partitioned  into two subsets such that the sum of each set are equal.
	 
	We construct an instance where there are two time slots $d_1=d_2=1$ and three players.
	Player 1 has 1 movie with popularity degree $\theta^{(1)}_1=1$ and chooses the first time slot.
	Player 2 has 1 movie with popularity degree $\theta^{(2)}_1=1$ and chooses the second time slot.
	Player 3 has $k$ movies with $\theta^{(3)}_j=r_j$ for $j\leq k$. 
	We ask whether player 3's optimal utility exceeds $x=\frac{4}{2+\sum_{j\leq k} r_j}$.
	
	Let $R_1$ denote the movies that player 3 release in the first time slot.
	Then player 3's utility is
	\begin{gather}
		\label{eqn:nphard}
	\frac{\sum_{j \in R_1}r_j}{1+\sum_{j \in R_1}r_j}+
	\frac{\sum_{j \in R-R_1}r_j}{1+\sum_{j \in R-R_1}r_j}
\end{gather}
	By analyzing the derivative of player 3's utility function, it is easy to see that player 3's utility is higher when the difference between two sums is smaller.
	Furthermore, the problem that whether player 3's optimal utility exceeds $x$ is equivalent to the problem that whether $\sum_{j \in R_1}r_j=\sum_{j \in R-R_1}r_j$. Hence the problem is NP-hard.    
	%and same utility densities. We will show the instance of the problem is equivalent to the optimization version of partition problem. The partition problem.  The optimization version is to minimize the difference between the sum of numbers in the two sets. 	
	%We denote the integer multi-set of popularity degrees of movies that player $i$ to release by $S^{(i)}=\{\theta_1^{(i)}, \theta_2^{(i)},\dots, \theta_{n_i}^{(i)}\}$ where $n_i$ is the number of movies player $i$ has. 
	%Then let $S_j^{(i)}$ represent the set of  popularity degrees of player $i$'s movies released at slot$j$. We use $PD_j$ to denote the sums of polarity degrees of players choosing time slot $j$, that is $PD_j =\sum_{l \in C_j}\sum_{k \in S_j^{(l)}}k$. Since the utility densities of 2 slots are the same, $PD_1 = PD_2$, denoted by $PD$. 	
%	Hence Player 3's best response is to find a partition $(S_1^{(i)},S_2^{(i)})$ of $S^{(i)}$ to maximize
%	\begin{gather}
%		\label{eqn:nphard}
%	\frac{d(\sum_{k \in S_1^{(i)}}k)}{PD+\sum_{k \in S_1^{(i)}}k}+ \frac{d(\sum_{k \in S_2^{(i)}}k)}{PD+\sum_{k \in S_2^{(i)}}k}
%\end{gather}
%	It is obvious that minimizing the difference between the sum of $S_1^{(i)}$ and the sum of $S_2^{(i)}$ is equivalent to maximizing Equation\ref{eqn:nphard}, which means the problem is equivalent to the optimization version of partition problem. 
	\end{proof}
	\fi

\begin{theorem}\label{thm:no_ne}
	A pure Nash equilibrium of the attraction competition game may not exist when a film studio has multiple films to release.
\end{theorem}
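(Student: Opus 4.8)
The plan is to prove the statement by exhibiting a small explicit instance that has no pure Nash equilibrium. The instance I would use has $m=2$ time slots with equal demands $d_1=d_2=1$, and three film studios: studios $1$ and $2$ each own a single movie of popularity $1$, while studio $3$ owns two movies with popularities $3$ and $1$. The design principle is that studio $3$'s two movies cannot be split evenly across the two slots: the popularity masses studio $3$ can place in a slot are $0,1,3,4$, never $2$. Hence studio $3$ always has a strict incentive to monopolize one slot with one of its movies, and this monopolizing behavior destabilizes whatever slot the two single-movie studios sit in, so that no configuration survives as an equilibrium.

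Concretely, the steps are as follows. First, write down the strategy spaces: studios $1$ and $2$ each pick a slot, and studio $3$ picks how to partition its two movies across the two slots, giving $2\times 2\times 4 = 16$ pure profiles. Second, cut the bookkeeping down with the two obvious automorphisms of the instance — the slot-swap symmetry (valid precisely because $d_1=d_2$) and the interchangeability of studios $1$ and $2$ (same type, same single-movie structure) — which leaves on the order of eight representative profiles. Third, for each representative profile exhibit a profitable unilateral deviation; these come in only two flavors: either a single-movie studio moves to the less congested slot, or studio $3$ re-partitions — pushing its popularity-$3$ movie into the slot occupied by the single-movie studios so that its popularity-$1$ movie is left alone in the other slot, or the mirror-image swap. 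Carrying this out shows every profile admits an improving deviation, hence no pure Nash equilibrium exists. It is also clean to package the conclusion as an explicit best-response cycle (of length six) visiting all configurations reachable by best responses.

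The main obstacle is the parameter design, not any individual computation. A multi-movie studio has a much richer action set — it effectively chooses a subset-sum decomposition of its total popularity — so a careless choice of parameters leaves it enough flexibility to stabilize some profile; in particular, if studio $3$'s two movies had equal popularity it could split them evenly and the resulting balanced configuration would be a Nash equilibrium. The roles of the ``unbalanceable'' popularities $\{3,1\}$, of the equal demands, and of keeping the single-movie studios separate (rather than fusing them into one two-movie studio, which would internalize the congestion externality between those two movies and restore an equilibrium) are all exactly to close off every such escape route. The only remaining care is routine but essential: since studio $3$ has four actions it is easy to miss a profile, which is why exploiting the two symmetries to keep the profile count small is worth doing before enumerating the deviations.
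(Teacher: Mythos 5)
Your proposal is correct and takes essentially the same approach as the paper, namely exhibiting an explicit two-equal-slot counterexample with no pure Nash equilibrium; I verified that in your instance ($d_1=d_2=1$, two single-movie studios of popularity $1$, one studio with movies of popularities $3$ and $1$) every profile admits a profitable deviation of exactly the flavors you describe. The paper's counterexample is merely smaller (one studio with movies of popularities $1$ and $2$ against a single-movie studio of popularity $1$) and is argued by the short chase cycle ``the single-movie studio follows the small movie, while the two-movie studio wants its big movie in that studio's slot and its small movie alone,'' rather than by symmetry-reduced enumeration.
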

We prove the Theorem \ref{thm:no_ne} by giving an example.
\begin{example}
	There are 2 players $A$, $B$ where player $A$ has two movies $A_1$ and $A_2$ with popularity degree $1$, $2$ respectively and player $B$ has one movie $B_1$ with popularity degree $1$. There are two time slots with same demand $d_1=d_2=1$. Every player's action space is $\{1,2\}$.	
\end{example}
	Player $B$ always tries to choose the same slot where $A_1$ released.
	Player $A$ wants $A_2$ released in the same time slot as $B_1$ and $A_2$ released in a different time slot. 
		Hence there is no pure Nash equilibrium in the game.
	
\subsection{Each movie lasts for two slots}
\label{2slots}
We consider the setting that each movie gets utility from both the time slot it releases and the next one. 
In this subsection, we assume a film's popularity degree does not decrease over time.
Formally, we define $$
%u_i(s_i, s_{-i})=d_{a_i}\cdot \frac{p_i}{\sum_{k\in C_{a_i}} p_k},\quad \forall i\in C_{a_i}.
u^*_i(\bm{a},\bm{\theta})=d_{a_i}\cdot \frac{\theta_i}{\sum_{k\in C_{a_i}+C_{a_i-1}} \theta_k}
+d_{a_i+1}\cdot \frac{\theta_i}{\sum_{k\in C_{a_i+1}+C_{a_i}} \theta_k}
$$
For this problem, we only solve the case for homogeneous players.
\begin{theorem}
	There always exists a Nash equilibrium.
\end{theorem}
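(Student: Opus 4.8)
The plan is to recognize this homogeneous two-slot game as a congestion game and then exhibit a Rosenthal-type potential, in direct analogy with the single-slot homogeneous argument (the potential $\phi(\bm{a})=\sum_j\sum_k d_j/k$) but with each film now occupying two consecutive physical slots. Since all films share a common popularity degree, I may normalize $\theta_i=1$, so that $\sum_{k\in C_{a_i}+C_{a_i-1}}\theta_k=|C_{a_i}|+|C_{a_i-1}|$ and every utility depends only on the counts $n_j:=|C_j|$. Define the \emph{load} on physical slot $t$ to be $\ell_t:=n_t+n_{t-1}$: the films showing in slot $t$ are exactly those released at $t$ (first week) and those released at $t-1$ (second week), and these two sets are disjoint since each film picks a single release slot. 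Then $u_i^*(\bm{a})=d_{a_i}/\ell_{a_i}+d_{a_i+1}/\ell_{a_i+1}$, with the conventions $C_0=\varnothing$ (so $n_0=0$) and $d_{m+1}=0$ to absorb the two boundary slots. Note that the assumption that popularity does not decrease over time is automatic here, since the common $\theta_i$ is constant.

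First I would set up the congestion game explicitly: the resources are the physical slots, a film choosing release date $j$ selects the length-two strategy $\{j,j+1\}$, the admissible strategy collection $\{1,2\},\{2,3\},\dots,\{m,m+1\}$ is identical for all players (a symmetric game), and each user of resource $t$ receives payoff $f_t(\ell_t)=d_t/\ell_t$. By homogeneity this per-resource payoff depends only on the load $\ell_t$ and not on the player's identity; this player-independence is precisely what homogeneity buys and is the reason the heterogeneous case is left open.

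Next I would exhibit the Rosenthal potential $\Phi(\bm{a})=\sum_{t}\sum_{k=1}^{\ell_t}\frac{d_t}{k}$ and verify it is an \emph{exact} potential. When player $i$ switches from strategy $S=\{j,j+1\}$ to $S'=\{j',j'+1\}$, only the resources in the symmetric difference $S\triangle S'$ change load: each resource in $S'\setminus S$ gains one unit and each in $S\setminus S'$ loses one, while resources in $S\cap S'$ are untouched. A short computation shows that both $u_i^*(S',\bm{a}_{-i})-u_i^*(S,\bm{a}_{-i})$ and $\Phi(S',\bm{a}_{-i})-\Phi(S,\bm{a}_{-i})$ equal $\sum_{t\in S'\setminus S}f_t(\ell_t^{\mathrm{new}})-\sum_{t\in S\setminus S'}f_t(\ell_t^{\mathrm{old}})$, so they coincide. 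Because the action space is finite, $\Phi$ attains a maximum, and at any maximizer no player has a strictly profitable deviation (which would strictly raise $\Phi$); that profile is a pure Nash equilibrium, which proves the theorem.

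The arithmetic is routine, so the care goes into two bookkeeping points. The main obstacle is the overlapping-interval structure: consecutive release dates $j$ and $j+1$ induce strategies sharing the physical slot $j+1$, so one must check the exact-potential identity still holds when $|j-j'|=1$; organizing the deviation analysis around $S\triangle S'$ (rather than slot by slot) handles this cleanly, since the shared slot's load is unchanged and contributes nothing to either difference. The second point is the boundaries—slot $1$ has no predecessor and the final release date has no genuine successor—which the conventions $C_0=\varnothing$ and $d_{m+1}=0$ absorb without disturbing the potential argument.
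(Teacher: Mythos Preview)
Your proposal is correct and follows essentially the same approach as the paper: normalize $\theta_i=1$, exhibit the Rosenthal potential $\Phi(\bm{a})=\sum_{t}\sum_{k=1}^{\ell_t} d_t/k$ with $\ell_t=|C_t|+|C_{t-1}|$, verify the exact-potential identity, and conclude that any maximizer of $\Phi$ is a pure Nash equilibrium. Your explicit congestion-game framing and your handling of the overlap case $|j-j'|=1$ and the boundary conventions are more careful than the paper's sketch, but the core argument is identical.
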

%We can prove it by constructing a potential function. The proof is provided in the full version.
\iffalse
\begin{proof}
	W.l.o.g., we assume $\theta_i=1$ for all player i.
	We introduce a potential function. 
	\begin{eqnarray}
	\phi(\bm{a})=\sum_{j\in[m]} \sum_{1\leq k\leq |C_j+C_{j+1}|} \frac{d_j}{k}
	\end{eqnarray}
	
	When player $i$ changes his action unilaterally, we show the change of $u^*_i$ equals to the change of function $\phi(\bm{a})$. In another word, the game is an exact potential game.
	To be specific, we have %Suppose $i$ deviate releasing time from $t$ to $l$, we have
	$$u^*_i(a_i,\bm{a}_{-i})-u^*_i(a_i',\bm{a}_{-i})=\phi(\bm{a})-\phi(a_i',\bm{a}_{-i}).$$
	%d_{l}\cdot \frac{1}{|C_l|+1}-d_{t}\cdot \frac{1}{|C_t|}=.$$
Since function $\phi$ is an upper bounded function, it has a maximum valuation. When function $\phi$ achieves this maximum valuation, it can not increase any more which implies that no player can increase his utility neither. Therefore the state is a Nash equilibrium.	
\end{proof}
\fi
This new game still belongs to the congestion game. But traditional method does not work for heterogeneous players any more. To appreciate the difficulty of the challenge, a player's utility depends on not only the players who choose the same time slot but also the players who choose the neighbor slots. 
It becomes very complicated and difficult to measure the impact of a player's best response. 
\section{Inferring from data}\label{sec:experiment}
We design a sequence of experiments to demonstrate that our model makes sense for the real movie industry and strategies of film studios in the real world form a Nash equilibrium. In this section, we will show that: (i) the movie's box office revenues is time-sensitive and highly affected by the release time; (ii) the number of audience is proportional to the popularity degrees of movies from the data; (iii) film studios in the real industry always give best response to other competitors' actions; (iv) the fact that film studios who have movies with higher box offices decide the release time earlier is a way that reach Nash equilibrium at last; (v) the final schedule of release dates in the real market is close to a Nash Equilibrium.

The data we use is from Maoyan \cite{maoyan}, a leading online movie tickets service in China. The dataset includes movies on the show in China by day for nearly 8 years(Jan. 1st, 2011 $\sim$ Oct. 30th, 2018) and consist of more than 68 thousand records. A record contains 24 attributes, including date, movie name, box office at this day, attendance rate, box office, release date and etc. The total number of movies in the dataset is 2818. Our experiments which involve the attendance rate only use the data after Jan. 12th, 2015, since there is no information of the attendance rate until 2015. The smaller dataset contains 40654 records and 1703 movies.

\subsection{Soundness of model}
\begin{figure}
	\centering
	\includegraphics[width=0.43\textwidth]{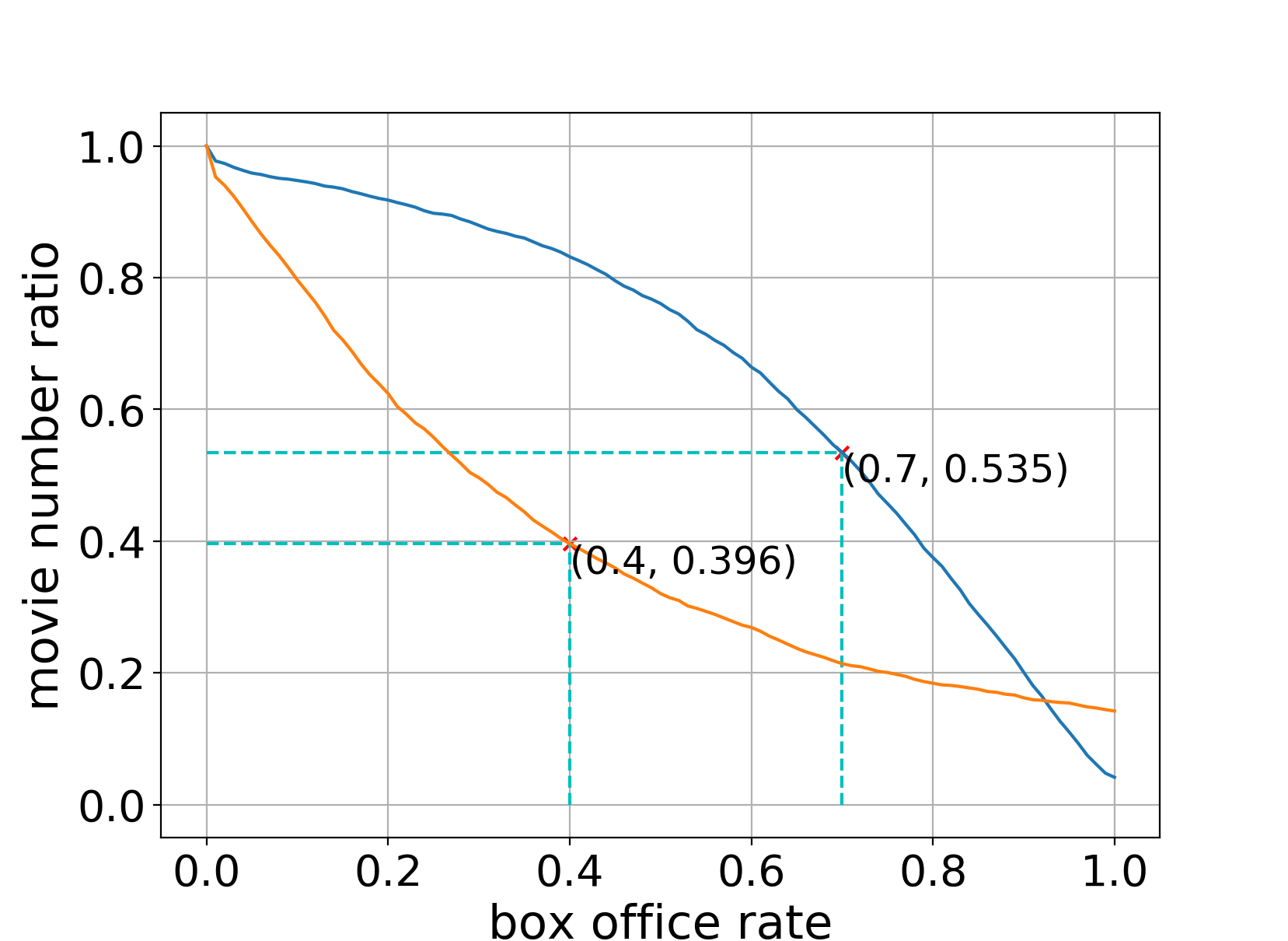}
	\caption{Box office rate at first/second week. A $(x, y)$ point in the blue line means that the ratio of movies whose {\em box offices of the first week} divided by {\em total box offices} are equal or greater than $x$ is $y$.  And in the orange line, a $(x, y)$ point represents that the ratio of movies whose {\em box offices of the second week} divided by {\em box offices of the first week} are equal or greater than $x$ is $y$.}
	\label{fig:box office rate}
\end{figure}

\paragraph{\textbf{The first week after release accounts for a great proportion in total box office of a movie.}}

We find that over half of movies gain more than $70\%$ of their final box offices at the first week. The first week is highly important to the box office (see the blue line in Figure \ref{fig:box office rate}). Moreover, the box office drops rapidly as the play time goes on. For about $60\%$ movies, the box offices at the second week is less than two-fifths of those at the first week (see the orange line in Figure  \ref{fig:box office rate}). All the facts above illustrate that the movie's box office is time-sensitive and the release time makes a significance to the final box office of a movie. Hence it is appropriate for us to focus on movies that released on the same day and consider their competitions at the first week.

%\begin{figure}
%    ~\hfill%
%     \subfigure[First week]{\includegraphics[width=0.23\textwidth]{picture/first_week.png}\label{sfig:fweek}}%
%     \hfill %
%     \subfigure[Second week]{\includegraphics[width=0.23\textwidth]{picture/second_week.png}\label{sfig:sweek}}%
%    \hfill~
%     \caption{Box office rate at first/second week. A $(x, y)$ in \ref{sfig:fweek} means that the ratio of films whose {\em box offices of the first week} divided by {\em total box offices} are equal or greater than $x$ is $y$.  And in \ref{sfig:sweek} a $(x, y)$ represents that the ratio of films whose {\em box offices of the second week} divided by {\em box offices of the first week} are equal or greater than $x$ is $y$.}
%     \label{fig:box office rate}
%\end{figure}

\paragraph{\textbf{Movie's box office is proportional to the popular degree in a single time slot.}}
Movie theaters' behaviors have great impacts on a movie's box office. Just imagine that all theaters resist a movie for some special reasons and do not play it then this movie's box office will be zero. 
However, theaters are utility maximizers with a simple adjustment strategy which results a reasonable formula for a movie's box office.
Theaters dynamically adjust the proportion of screenings of movies to balance the attendance rates.
When a theater finds a movie has a relatively high (low) attendance rate, the theater will arrange more (less) screens for this movie on the next day. 
%By the adjustment of screenings sharing, theaters can avoid the situation that the seat resources of popular movies exhaust while less popular movies occupies many empty seats. %1
%By doing so, theaters can avoid the situation that the corner seats in a screening of popular movies is sold out while the core seats in a screening of less popular movies are still not taken. Therefore when an audience chooses which movie to watch he will only consider which movie he is more interested instead of the seat position in a screening. %2
%The theater will not play only one movie which gives the highest profit per screening for the time being.
In fact, a theater computes every movie's profit in one screening.
Intuitively, the theater intends to arrange all screenings for the movie with the highest profit per screening for the time being. However the profit per screening of that movie will drop. 
A movie's profit in one screening is proportional to the number of audience, assuming the prices are same, the theater will arrange the number of screenings of one movie proportional to its demand, i.e., the popular degree. Hence each movie's box office is proportional to its popular degree.

Ideally, theaters should make the number of screenings in proportion to the number of audience, namely keep the attendance rates of all movies on the show balanced. 
This is a strong evidence to support our assumption that each player's utility is proportional to its popularity degree.

In this experiment, we only consider the `regular' movies whose attendance rate is above a threshold. Movies with low attendance rates usually have various reasons to stay on the screen, like contracts.
1311 days in total are divided into 264 periods which last about 6 to 7 days usually. We call a new period begins if the screenings of new movies on that day are larger than ten percent of all screenings. Thus in each period, movie competition is relatively stable. 
We use the following formula to judge if the theater is balancing the attendance rate:
$$\sum_i (\alpha_{i,t} -avg_t) (\beta_{i,t} - \beta_{i,t+1})$$
Here $\alpha_i$ means movie $i$'s attendance rate on day $t$,
$avg_t$ means the average attendance rate of regular movies on day $t$.
$\beta_{i,t}$ represents the screenings of movie $i$ on day $t$.
If $\alpha_{i,t}<avg_t$ which means the performance of movie $i$ is below the average, then the theater should decrease its number of screenings and get $\beta_{i,t}<\beta_{i,t+1}$.
Ideally, we should have $(\alpha_{i,t} -avg_t) (\beta_{i,t} - \beta_{i,t+1})$.
Therefore, we say the theaters' adjustment is rational if the summation is larger than zero.
The result is shown in Figure \ref{fig: attendance rate}. We can see theaters are rational in most of days.
%
%
%Over 1046 days of data conforms to the rule that the proportion of screening of a movie increase on one day once its attendance rate exceed the average attendance rate the day before. It shows the proportion of screenings reflects the popularity of movies, so we can assume the number of audience is proportional to the popular ability in our model.
\begin{figure}
    \centering
    \includegraphics[width=0.52\textwidth]{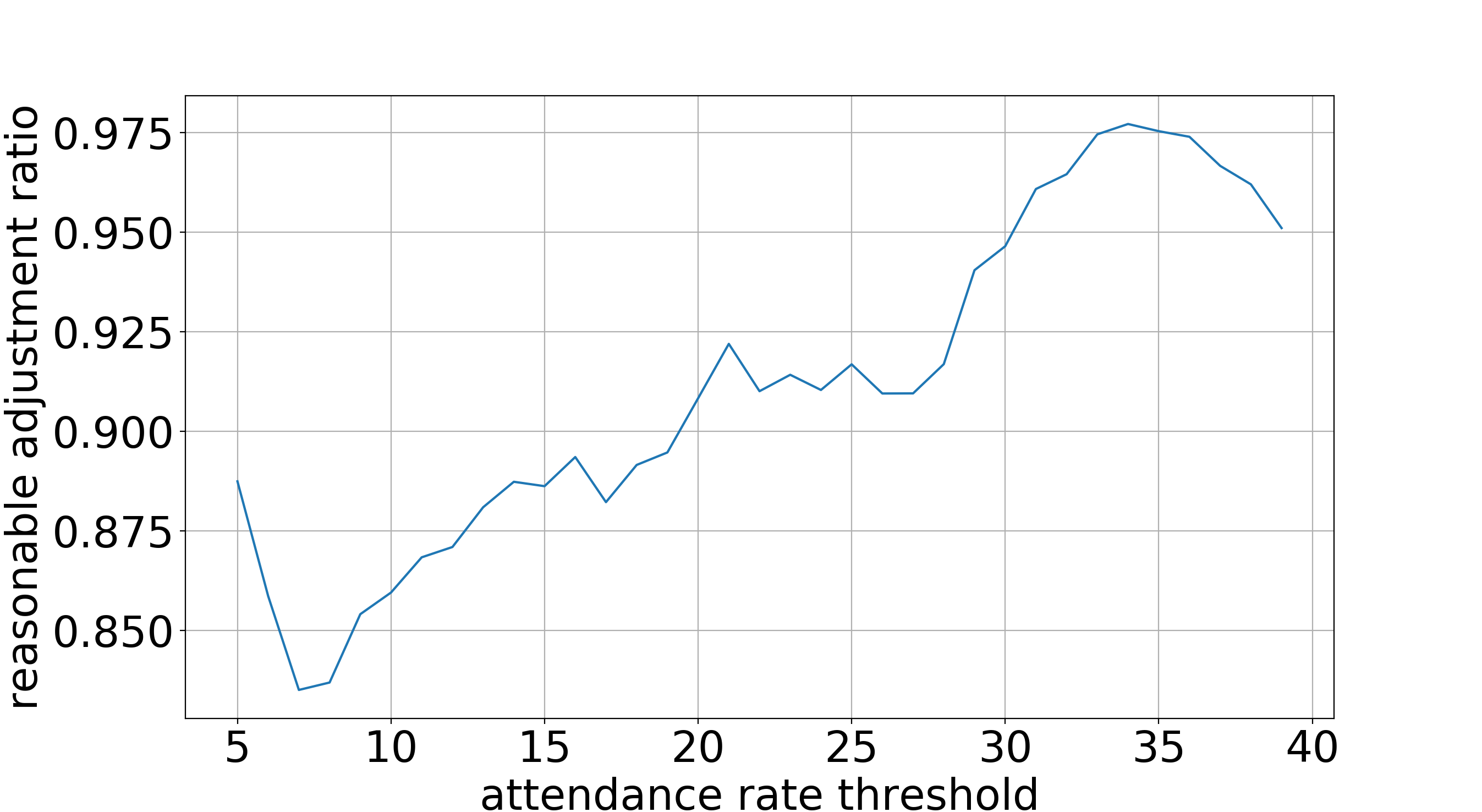}
    \caption{Theaters' rational adjustment ratio.}
    \label{fig: attendance rate}
\end{figure}
 
\paragraph{\textbf{Film studios change the release time when some foreign movies happen to released at the same time.}}
Foreign movies played in China are usually big-budget and star-studded movies or have good word-of-mouth effects, which means foreign movies are likely to be highly popular. Hence a rational domestic movie will avoid being released at the same time as a foreign movie released in case of box office dilution. Foreign movies usually have limitations on choices %time to decide when to show due to the need of going through 
because of the film import formalities. The fact is that the release dates of many domestic movies are changed when foreign movies happen to be released at the same time. For example, three star-studded domestic movies have been set Aug. 25th as release date several months before that day. However, {\em Valerian and the City of a Thousand Planets} and {\em Cars 3} made the decision to be on the show on that day latter. Then the release dates of all three movies are changed after that news. This phenomenon demonstrates that film studios in the real industry always give best response to other competitors’ actions.
%\short{
%\begin{table}[h]
%	 \centering
%\begin{tabular}{|r|r|}
%	\hline
%	\textbf{box office(million)} & \textbf{average days in advance} \\ \hline
%		$[500, \infty)$                   &             127.82            \\ \hline
%		$[100, 500)$                            &       68.00                  \\ \hline
% 		$[10, 100)$                           &         54.17                \\ \hline
% 		$[0, 10)$                           &              36.71           \\ \hline
%	\end{tabular}
%	\caption{The average days in advance of movies. }
%          \label{table:days}
%\end{table}
%}
%{
\begin{table}[h]
	 \centering
\begin{tabular}{|r|r|}
	\hline
	\textbf{box office(million)} & \textbf{average days in advance} \\ \hline
		$[100, \infty)$                   &             97.61            \\ \hline
 		$[10, 100)$                           &         49.72                \\ \hline
 		$[1, 10)$                           &              32.19           \\ \hline
		$[0, 1)$                           &              25.60           \\ \hline
	\end{tabular}
	\caption{The average days in advance of movies. }
	\label{table:days}   
\end{table}

{\em \textbf{Movies with higher box offices are made release decisions earlier.}} 
%\paragraph{\textbf{Movies with higher box offices make release decisions earlier.}}
In the real industry, movies with high probability of attracting a large audience are decided release time early. 
In the last year, there are 130 movies made in China having detail information including the box office, release date and the time stamp of decision making.
These movies fall into four categories based on their total box offices shown in Table \ref{table:days}.
We compute the average number of days that when the decision is made before release date for each movie.
%Blockbusters are less likely to be opened at the same day while small-budget flicks give the range of release dates first and make sure the exact day after blockbusters set the release schedules. 
Movies with a famous director and popular movie stars are more promising to have good quality and popularity among the audience. These movies usually are made release decisions one year or half before release, which is much earlier than other movies. The higher box office, the earlier time film studios make decisions. 
This phenomenon is consistent with the case that players make plans in the decreasing order of popularity degree. In such a way, players reach a Nash equilibrium easier and avoid making deviations frequently. 

% (see \autoref{sec:experiment}).
\paragraph{\textbf{The schedule of release dates in the real market is close to a Nash Equilibrium.}} We design experiments to test whether the release dates of most movies are best responses to other competitors' actions, by checking whether film studios gain less box offices after shifting their release dates. We first take the total box office in several consecutive days as the market size of a slot and infer the popularity degree of each movie from the box office . For each movie, we compare the real box office with the expected box office of the movie if its release time is changed unilaterally.
%after its release time changed under the condition that the release dates of other movies are fixed, and then compare it to the actual box office the movie gains. 

We partition days into slots.
% and days in a slot is consecutive.  No day belongs to two different slots. 
Since movies are usually released on Friday, most of slots are defined as a sequence of days that begins from Friday and end to the next Thursday. Slots are adjusted slightly when there are major holidays, like Spring Festival, the National Day, New Year and so on.
% and the first week after release accounts for a great proportion in total box office of a movie,  The days near special holidays will be separated into different slots manually. Thus not every slot contains 7 days. 

The  popularity degrees are inferred from box offices. We normalize the sum of popularity degrees in the first slot, and the popularity degree of each movie released on the first slot equals movie's box office in such slot divided by the slot's total box office. There exists movies released in previous slots but still on show in the present slot. We call such movies as {\em old movies} in the present slot and call movies first released in the present slot as {\em new movies}. \footnote{We have shown in Section \ref{2slots} that it is difficult to derive a theoretical result when a movie can get box office in two slots. In practice, we have to deal with the competition from {\em old movies}.}
A discount factor $\gamma$ is used to reflect the decrease of the popularity degrees of old movies. In slot $k$ (except the first slot), each new movie's popularity degree is calculated as follows:
\begin{gather*}
\theta_i = \frac{B_{i,k}}{\sum_{j \in OM_k}B_{j,k}}\cdot \sum_{j \in OM_k} \gamma^{n_{j,k}} \theta_j  
\end{gather*}
where $B_{ik}$ denotes movie $i$'s box office in slot $k$, $OM_k$ denotes the set of old movies in slot $k$ and $n_{j,k}$ represents how many slots movie $j$ have been on show until slot $k$.

We suppose that film studios have the ability to shift their actual release slots by no more than 4 slots. If the expected box office of a movie after the release slot changed is lower than 1.1 times the actual box office, we think the actual release slot is the best responses of this film studio to other competitors' strategies. We use the  {\em best-response rate} to represent to what extent players give best response, which has the formulation:
\begin{gather*}
\text{best-response rate} = \frac{\sum_{i \in BST} B_{ir(i)}}{\sum_{i \in N} B_{ir(i)}}
\end{gather*}
where $r(i)$ denotes movie $i$'s actual release slot is $r(i)$, and $BST$ and $N$ represent the set movies with best responses and the total set of movies respectively. Obviously, the best-response rate equals 1 if all movies are best responses. The best-response rates of shifting $x$ slots ($x \in [-4,+4] $) are shown in Table \ref{table:ne}. The results strongly prove that the schedule of release dates in the real market is a Nash Equilibrium.

\begin{table}[h]
	 \centering
\begin{tabular}{|r|r|r|r|r|}
	\hline
	\textbf{shift $x$ slots}   & -4 & -3 & -2 & -1   \\ \hline
	\textbf{best-respnse rate} & 0.8624 & 0.8769 & 0.8918 & 0.9083   \\ \hline
	\textbf{shift $x$ slots}   & 1 & 2 & 3 & 4  \\ \hline
	\textbf{best-respnse rate} & 0.8413 & 0.8858 & 0.8982 & 0.8629   \\ \hline

 	\end{tabular}
	\caption{The best-response rates of shifting slots. }
	\label{table:ne}   
\end{table}
If a film studio chooses a releasing time randomly, then a movie has a half probability to increase profit by deviation to another slot and the number in the table will be around 0.5.

\clearpage
\bibliographystyle{plainnat}
\medskip

\bibliography{ref}

%\normalsize
%\appendix

\end{document}